\let\theoremstyle\relax
\theoremstyle{definition}
\newtheorem{definition}{Definition}
\newtheorem{theorem}{Theorem}
\newtheorem{assumption}{Assumption}
\newtheorem{proposition}{Proposition}
\newtheorem{problem}{Problem}
\title{Conference2021}
\title{\LARGE \bf Energy-Optimal Goal Assignment of Multi-Agent System\\with Goal Trajectories in Polynomials}
\author{Heeseung Bang, \textit{Student Member, IEEE}, Logan E. Beaver, \textit{Student Member, IEEE},\\ Andreas A. Malikopoulos, \textit{Senior Member, IEEE}
    \thanks{This research was supported by the Sociotechnical Systems Center (SSC) at the University of Delaware.}
	\thanks{The authors are with the Department of Mechanical Engineering, University of Delaware, Newark, DE 19716, USA. (emails: \tt\small{heeseung@udel.edu};  \tt\small{lebeaver@udel.edu};  \tt\small{andreas@udel.edu}.)}
}
\date{October 2020}
\begin{document}

\maketitle

\begin{abstract}
    In this paper, we propose an approach for solving an energy-optimal goal assignment problem to generate the desired formation in multi-agent systems. Each agent solves a decentralized optimization problem with only local information about its neighboring agents and the goals.
    The optimization problem consists of two sub-problems. The first problem seeks to minimize the energy for each agent to reach certain goals, while the second problem  entreats an optimal combination of goal and agent pairs that minimizes the energy cost.
    By assuming the goal trajectories are given in a polynomial form, we prove the solution to the formulated problem exists globally. Finally, the effectiveness of the proposed approach is validated through the simulation.
\end{abstract}

\section{Introduction}

    %% general motivation
    Control of swarm systems is an emerging topic in the fields of controls and robotics.
    Due to their adaptability and flexibility \cite{Oh2017}, swarm systems have attracted considerable attention in transportation \cite{chalaki2020experimental}, construction \cite{Lindsey2012ConstructionTeams}, and surveillance \cite{Corts2009} applications.
    As we deploy swarms in experimental testbeds \cite{Pickem2017TheTestbed,Malikopoulos2018b, Rubenstein2012, Beaver2020DemonstrationCity} and outdoor experiments \cite{Vasarhelyi2018OptimizedEnvironments}, it is critical to minimize the cost per agent to ensure swarms are an affordable solution to emerging problems.
    This is the driving force behind energy-optimal control algorithms, which reduce the battery storage requirements, and therefore, the cost, of agents while simultaneously expanding their useful life.
    
    %% introducing the problem
    A fundamental problem in swarm systems is the assignment of agents to a particular formation.
    There is a rich literature on the creation of a desired formation, such generating rigid formations from triangular sub-structures \cite{Guo2010,Hanada2007}, crystal growth-inspired algorithms \cite{Song}, and region-based formation controllers \cite{Cheah2009}.
    It is also possible for agents to construct formations using only scalar, bearing, or distance measurements \cite{Swartling2014,Lin2004}, and many formation problems may be solved using consensus techniques \cite{Olfati-Saber2007}.
    However, only a few of these approaches consider the energy cost to individual agents in the swarm.
    
    %%
    %%
    %% comparison to state of the art
    Similar to the efforts reported in \cite{Turpin2014, Turpin2013GoalRobots, Morgan2016}, we seek the assignment of a finite number of agents to a set of desired states.
    Our approach leverages optimal control to guarantee inter-agent collision avoidance while minimizing the energy consumed by each agent.
    Unlike \cite{Turpin2014}, our approach is not pairwise between agents, instead we consider all nearby agents during goal assignment.
    Our approach also does not require the agents to be assigned to unique goals a priori.
    Similar to \cite{Turpin2013GoalRobots}, our approach imposes a priority ordering on the agents to generate assignments and trajectories.
    However, our approach to prioritization is dynamic and decentralized, as opposed to the global static priority presented in \cite{Turpin2013GoalRobots}. 
    Finally, our approach to assignment only considers the local area around an agent, unlike the global auction algorithm in \cite{Morgan2016}.
    Additionally, we consider the unconstrained energy cost required to reach a goal during assignment, whereas \cite{Turpin2014, Turpin2013GoalRobots,Morgan2016} only consider the distance to the goal.
    In other words, our approach considers the energy cost required for the agent to match the goal's velocity.
    
    %% transition to contributions
    By leveraging optimal control, we explicitly allow for the prioritization of safety as a hard constraint on the system.
    Strong guarantees on safety are valuable to avoid inter-agent collisions and to guarantee that agents avoid obstacles in the environment.
    %% contributions
    We propose an extension of our previous work on energy-optimal goal assignment and trajectory generation \cite{Beaver2019AGeneration,Beaver2020AnAgents}.
    The main contributions of this paper are:
    (1) we optimally determine the arrival time of each agent during assignment, while we provide a set of sufficient conditions on the goal dynamics to guarantee that the arrival time is finite; and
    (2) we propose an event-triggered approach to goal assignment that guarantees all agents will converge to a unique goal.
    We also provide a numerical demonstration of our improved assignment and trajectory generation scheme.
    
    %%%%%%%%%%%%%%%%%%%%%%%
    % \todoNote{It could be worthwhile to compare this approach to C-CAPT or the auction algorithm as well, although that is a lot of work and probably not critical to the paper.}

    %% organization
The remainder of the paper is organized as follows. 
In Section \ref{sec:ModelingFramework}, we formulate the optimal goal assignment and trajectory generation problem. 
In Section \ref{sec:GoalAssignment}, we formulate the goal assignment problem and provide an event-triggered update scheme that guarantees convergence.
In Section \ref{sec:PathPlanning}, we explain the trajectory planning scheme, and in Section \ref{sec:Simulation}, we quantify the improvement in performance over our previous work \cite{Beaver2019AGeneration, Beaver2020AnAgents}.
Finally, we draw our conclusions and propose future research directions in Section \ref{sec:Conclusion}.

\section{Modeling Framework} \label{sec:ModelingFramework}

We consider a problem of generating a desired formation by allocating $N\in\mathbb{N}$ agents into $M\in\mathbb{N}$ goals, where $M \geq N$. The agents and the goals are indexed by the sets $\mathcal{A}=\{1,\dots,N\}$ and $\mathcal{F}=\{1,\dots,M\}$, respectively. For continuous time $t\in\mathbb{R}_{\geq 0}$, each agent $i\in\mathcal{A}$ obeys double-integrator dynamics,
\begin{align}
    \dot{\mathbf{p}}_{i}(t)&=\mathbf{v}_i(t), \label{eqn:pDynamics} \\
    \dot{\mathbf{v}}_{i}(t)&=\mathbf{u}_i(t), \label{eqn:vDynamics}
\end{align}
where $\mathbf{p}_i(t)\in\mathbb{R}^2$ and $\mathbf{v}_i(t)\in\mathbb{R}^2$ are the time-varying position and velocity vectors, and $\mathbf{u}_i(t)\in\mathbb{R}^2$ is the control input. The control input and velocity of each agent are bounded by
\begin{align}
    ||\mathbf{v}_i(t)|| &\leq v_{\max}, \label{eqn:vBounds}\\
    ||\mathbf{u}_i(t)|| &\leq u_{\max}, \label{eqn:uBounds}
\end{align}
where $v_{\max}$ and $u_{\max}$ are the maximum allowable speed and control inputs, and $|| \cdot ||$ is the Euclidean norm. The state of each agent is given by the time-varying vector
\begin{equation}
    \mathbf{x}_i(t) = \left[\begin{array}{c}
         \mathbf{p}_i(t) \\ \mathbf{v}_i(t)
    \end{array} \right].
\end{equation}
%v_{min} \leq \todoNote{I don't know if a $v_{min}$ and $u_{min}$ are strictly required unless we want the minimum speed/control to have a magnitude greater than zero.}
We denote the distance between two agents $i, j\in\mathcal{A}$ by 
\begin{equation}
	d_{ij}(t) = ||\mathbf{p}_i(t)-\mathbf{p}_j(t)||.
\end{equation}
In order to avoid collisions between agents, we impose the following pairwise constraints for all agents $i,j\in\mathcal{A}, i\neq j$,
\begin{align}
	d_{ij}(t) \geq 2R,&\quad\forall t\geq 0,\\
    h\gg 2R,&
\end{align}
where $R \in \mathbb{R}_{>0}$ is the radius of a safety disk centered on each agent, and $h \in \mathbb{R}_{>0}$ is the sensing and communication horizon. Next, we define the neighborhood of an agent, which is our basis for local information. % introduce the concept of neighborhood and desired formation.
\begin{definition} \label{def:neighborhood}
	The \textit{neighborhood} of agent $i\in\mathcal{A}$ is the time-varying set 
	\begin{equation*}
	\mathcal{N}_i(t) = \Big\{j\in\mathcal{A} ~ \Big| ~ d_{ij}(t) \leq h\Big\}.
	\end{equation*}
	Agent $i$ may sense and communicate with every neighboring agent $j\in\mathcal{N}_i(t)$.
    %The agents in the neighborhood can share their global states and assigned goals if any.
\end{definition}

We also define the notion of \textit{desired formation}.
\begin{definition}
	The \textit{desired formation} is the set of time-varying vectors $\mathcal{G}(t) = \{\mathbf{p}_k^*(t) \in \mathbb{R}^2 ~ | ~ k \in \mathcal{F}\}$. 
\end{definition}
The set $\mathcal{G}(t)$ can be prescribed offline, i.e., by a designer, or online by a high-level planner.
Since we consider the desired formation with polynomial trajectories, each goal $k\in\mathcal{F}$ has the form
\begin{equation}
    \mathbf{p}_k^*(t) = \sum_{l=0}^{\eta} \mathbf{c}_{k,l} t^l, \quad \eta \geq 2, \label{eqn:goal}
\end{equation}
where $\eta$ is the degree of the polynomial and the coefficients $\mathbf{c}_{k,l}\in\mathbb{R}^2$ are constant vectors.

We impose the following model for the rate of energy consumption by agent $i\in\mathcal{A}$,
\begin{equation}
	\dot{E}_i(t)=\frac{1}{2}||\mathbf{u}_i(t)||^2.   \label{eqn:e_dot}
\end{equation}
Physically, this energy model implies that minimizing $L^2$ norm of acceleration directly reduces the total energy consumed by each agent.
%Using this energy model, we formulate the energy-optimal goal assignment in the next section.

In our modeling framework, we impose the following assumptions.

\begin{assumption} \label{asm:asm1}
    There are no errors or delays with respect to communication and sensing within each agent's neighborhood. %the state $\mathbf{x}_i(t)$ for agents $i\in\mathcal{A}$.
\end{assumption}

\begin{assumption}\label{asm:asm2}
    The energy cost of communication is negligible, i.e., the energy consumption is only in the form of \eqref{eqn:e_dot}.
\end{assumption}

\begin{assumption}\label{asm:asm3}
   Each agent has a low-level onboard controller that can track the generated optimal trajectory.
\end{assumption}

Assumption \ref{asm:asm1} is employed to characterize the idealized performance of our approach. This may be relaxed by using a stochastic optimal control problem, or robust control, for trajectory generation. Assumption \ref{asm:asm2} may be relaxed for the case with long-distance communication. For that case, the communication cost can be controlled by varying the communication horizon $h$.
Assumption \ref{asm:asm3} may be strong for certain applications. This assumption may be relaxed by including kinematic constraints in the optimal trajectory generation problem, or by employing a robust low-level controller, such as a control barrier function, for tracking.

\section{Optimal Goal Assignment} \label{sec:GoalAssignment}
The objective of a goal assignment problem is to assign each agent to a unique goal such that the total energy consumption of all agents is minimized. We separate this into two sub-problems: (1) finding the minimum-energy unconstrained trajectory for each agent to reach every goal, and (2) finding the optimal assignment of agents to goals such that total energy consumption is minimized and at most one agent is assigned to each goal. %best combination of goal and agent pairs that minimizes the total sum while meeting all the constraints.

To solve the first sub-problem, we consider the case of any agent $i\in\mathcal{A}$ traveling between two fixed states with the energy model in the form of \eqref{eqn:e_dot}. In this case, Hamiltonian analysis yields the following optimal unconstrained minimum-energy trajectory \cite{Malikopoulos2018a},
\begin{align}
    \mathbf{u}_i(t) &= \mathbf{a}_i t + \mathbf{b}_i, \label{eqn:uUnc}\\
    \mathbf{v}_i(t) &= \frac{\mathbf{a}_i}{2}t^2 + \mathbf{b}_i t + \mathbf{c}_i, \label{eqn:vUnc}\\
    \mathbf{p}_i(t) &= \frac{\mathbf{a}_i}{6}t^3 + \frac{\mathbf{b}_i}{2} t^2 + \mathbf{c}_i t + \mathbf{d}_i \label{eqn:pUnc},
\end{align}
where $\mathbf{a}_i$, $\mathbf{b}_i$, $\mathbf{c}_i$, and $\mathbf{d}_i$ are constant vectors of integration.
%Thus, the minimum required energy for agent $i$ to reach the goal $k$ is given by
Thus, we get the minimum required total-energy for agent $i$ to reach the goal $k\in\mathcal{F}$, by substituting \eqref{eqn:uUnc} into \eqref{eqn:e_dot}, that is,
\begin{align}
E_{i,k} (t_{i,k}) &= \int_0^{t_{i,k}} ||\mathbf{u}_i (\tau)||^2 d\tau \nonumber\\
&= \frac{a_{i,x}^2+a_{i,y}^2}{3}t_{i,k}^3 +(a_{i,x}b_{i,x}+a_{i,y}b_{i,y})t_{i,k}^2 \nonumber \\
&~+(b_{i,x}^2+b_{i,y}^2)t_{i,k}, \label{eqn:E}
\end{align}
where $t_{i,k}$ is the time taken for the agent $i$ to reach the goal $k$, and $\mathbf{a}_i = [a_{i,x},a_{i,y}]^T$, $\mathbf{b}_i = [b_{i,x},b_{i,y}]^T$ are the coefficients of \eqref{eqn:uUnc}.
We solve for the coefficients $\mathbf{a}_i$ and $\mathbf{b}_i$ by substituting the boundary conditions into \eqref{eqn:vUnc} and \eqref{eqn:pUnc},
%By substituting boundary conditions into \eqref{eqn:vUnc} and \eqref{eqn:pUnc}, coefficients $\mathbf{a}_i$ and $\mathbf{b}_i$ are derived as follows:
\begin{align}
\mathbf{a}_i &= \frac{12}{t_{i,k}^3} \left( \mathbf{p}_{i,0} - \mathbf{p}_k^*(t_{i,k}) \right) + \frac{6}{t_{i,k}^2} \left(\mathbf{v}_{i,0}+\mathbf{v}_k^*(t_{i,k}) \right), \label{eqn:ai}\\
\mathbf{b}_i &= -\frac{6}{t_{i,k}^2} \left(\mathbf{p}_{i,0} - \mathbf{p}_k^*(t_{i,k}) \right) - \frac{2}{t_{i,k}} \left(2 \mathbf{v}_{i,0}+\mathbf{v}_k^*(t_{i,k}) \right). \label{eqn:bi}
\end{align}
Here, $\mathbf{p}_{i,0}$ and $\mathbf{v}_{i,0}$ are the initial position and velocity of the agent $i$, respectively.
Next, we define an optimization problem to find the minimum-energy arrival time.
\begin{problem}[Energy Minimization] \label{prb:energy}
    The minimum-energy arrival time for agent $i\in\mathcal{A}$ traveling to goal $k\in\mathcal{F}$ is found by solving the following optimization problem,
	%The minimum energy consumption of agent $i$ traveling to the goal $k$ can be found by solving the following optimization problem
	\begin{align}
	    E_{i,k}^* =& \min_{t_{i,k}} E_{i,k} (t_{i,k})\\
	        &\text{subject to } \eqref{eqn:goal},\nonumber
	\end{align}
\end{problem}

\begin{proposition} \label{prp:minE}
    For goal trajectories in the form of \eqref{eqn:goal}, there always exists a globally optimal solution to Problem \ref{prb:energy}.
\end{proposition}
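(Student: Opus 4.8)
The plan is to regard $E_{i,k}$ as an explicit scalar function of the single variable $t:=t_{i,k}$ on the open domain $(0,\infty)$ (travel times must be strictly positive), and to prove existence of a global minimizer by a coercivity argument: a function that is continuous on $(0,\infty)$ and diverges to $+\infty$ at both ends attains its infimum at an interior point.

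First I would substitute \eqref{eqn:ai} and \eqref{eqn:bi} into \eqref{eqn:E}. Since $\mathbf{p}_k^*$ and $\mathbf{v}_k^*=\dot{\mathbf{p}}_k^*$ are polynomials in $t$ by \eqref{eqn:goal}, the coefficients $\mathbf{a}_i,\mathbf{b}_i$ are rational functions of $t$ whose only pole is at $t=0$; hence $E_{i,k}(t)$ is a rational function with denominator a power of $t$, and is therefore continuous (indeed smooth) on $(0,\infty)$. This settles continuity. For the limits it is convenient to rewrite the substituted expression over the common denominator $t^3$ in the equivalent closed form $E_{i,k}=\frac{12\|\Delta\mathbf{p}\|^2}{t^3}-\frac{12\,\Delta\mathbf{p}\cdot(\mathbf{v}_{i,0}+\mathbf{v}_k^*)}{t^2}+\frac{4\left(\|\mathbf{v}_{i,0}\|^2+\mathbf{v}_{i,0}\cdot\mathbf{v}_k^*+\|\mathbf{v}_k^*\|^2\right)}{t}$, with $\Delta\mathbf{p}=\mathbf{p}_k^*(t)-\mathbf{p}_{i,0}$. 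As $t\to 0^+$ the $t^{-3}$ term dominates and forces $E_{i,k}\to+\infty$ whenever $\mathbf{p}_{i,0}\neq\mathbf{p}_k^*(0)$; the only subtlety here is the fully degenerate configuration $\mathbf{p}_{i,0}=\mathbf{p}_k^*(0)$ and $\mathbf{v}_{i,0}=\mathbf{v}_k^*(0)$, for which the leading behavior becomes $4\|\mathbf{v}_{i,0}-\mathbf{v}_k^*(0)\|^2/t$ and, when that too vanishes, the infimum is $0$ but is approached only as $t\to 0^+$. I would dispose of this non-generic case by an explicit remark (it corresponds to an agent already sitting at the goal state).

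I expect the $t\to\infty$ limit to be the main obstacle, because the leading terms of $\mathbf{a}_i$ and $\mathbf{b}_i$ cancel for particular values of $\eta$, so the dominant power cannot be read off from either coefficient in isolation. The resolution is to collect the coefficient of the highest surviving power $t^{2\eta-3}$ in $E_{i,k}$; after expanding the three terms of the closed form and summing over the $x$ and $y$ components, this coefficient works out to $4(\eta^2-3\eta+3)\|\mathbf{c}_{k,\eta}\|^2$. The quadratic $\eta^2-3\eta+3$ has negative discriminant, hence is strictly positive, and $\mathbf{c}_{k,\eta}\neq\mathbf{0}$ because $\eta$ is the degree of \eqref{eqn:goal}, so the coefficient is strictly positive. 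This is precisely where the hypothesis $\eta\geq 2$ enters: it forces the exponent $2\eta-3\geq 1$, so the term blows up and $E_{i,k}\to+\infty$; for $\eta\leq 1$ the exponent would be negative and the energy would instead decay to zero without an attained minimizer.

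With continuity and coercivity at both ends in hand, the conclusion is routine: fixing any reference time $t_0>0$, coercivity confines the sublevel set $\{t\in(0,\infty): E_{i,k}(t)\leq E_{i,k}(t_0)\}$ to a compact subinterval $[\alpha,\beta]\subset(0,\infty)$, and the extreme value theorem yields a minimizer of $E_{i,k}$ on $[\alpha,\beta]$, which is therefore a global minimizer over $(0,\infty)$. This establishes the existence of a globally optimal solution to Problem \ref{prb:energy}.
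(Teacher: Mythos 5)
Your proof takes essentially the same route as the paper's: rewrite $E_{i,k}$ as a Laurent polynomial in $t_{i,k}$ on $(0,\infty)$, show divergence as $t\to 0^{+}$ (the $t^{-3}$ term) and as $t\to\infty$ (the $t^{2\eta-3}$ term, using $\eta\geq 2$), and conclude via continuity and the extreme value theorem on a compact subinterval. The substantive difference is that you actually verify the two sign claims the paper only asserts: your leading coefficient $4(\eta^{2}-3\eta+3)\lVert\mathbf{c}_{k,\eta}\rVert^{2}$ checks out (the cross terms from $\lVert\mathbf{a}_i\rVert^2 t^3$, $(\mathbf{a}_i\cdot\mathbf{b}_i)t^2$, and $\lVert\mathbf{b}_i\rVert^2 t$ sum to $12(\eta-2)^2-12(\eta-2)(\eta-3)+4(\eta-3)^2=4(\eta^2-3\eta+3)>0$), which is needed precisely because the individual leading coefficients of $\mathbf{a}_i$ and $\mathbf{b}_i$ can vanish for $\eta=2$ or $\eta=3$; and you correctly flag that the paper's claim $\alpha_0>0$ fails in the degenerate case $\mathbf{p}_{i,0}=\mathbf{p}_k^*(0)$, where the infimum may be approached only as $t\to 0^{+}$ and a minimizer need not exist --- a gap in the paper's own argument that your remark handles explicitly.
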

\begin{proof}
    First we substitute \eqref{eqn:goal} and its time derivative into \eqref{eqn:ai} and \eqref{eqn:bi}, which yields equations of the form
    \begin{align} \label{eqn:minEa}
        \mathbf{a}_i = \sum_{l=0}^{\eta} \mathbf{c}_{l,a} t^{l-3}, \\
        \mathbf{b}_i = \sum_{l=0}^{\eta} \mathbf{c}_{l,b} t^{l-2}, \label{eqn:minEb}
    \end{align}
    Squaring \eqref{eqn:minEa} and \eqref{eqn:minEb} and substituting the result into \eqref{eqn:E} yields an equation of the form
%
%    Consider the energy function given in \eqref{eqn:E}. By substituting \eqref{eqn:goal} and its derivative into \eqref{eqn:ai} and \eqref{eqn:bi}, the energy cost function can be expressed as
    \begin{equation}
    	E_{i,k}(t_{i,k}) = \sum_{l=0}^{2\eta} \alpha_l t_{i,k}^{l-3}, \label{eqn:Epol}
    \end{equation}
    where $\alpha_l$ are constant numbers, and $\alpha_{2\eta}>0$, $\alpha_{0}>0$. 
    Eq. \eqref{eqn:goal} implies that $\eta \geq 2$, thus \eqref{eqn:Epol} always has polynomial and inverted radical terms.
    Thus, as $t\to\infty$, the polynomial terms dominate and
    \begin{equation}
        \lim_{t\to\infty}E_{i,k}(t)=\infty. \label{eqn:lim1}
    \end{equation}
    As $t\to 0^+$, the inverted radical terms dominate, and
    \begin{equation}
        \lim_{t\to0^{+}}E_{i,k}(t)=\infty. \label{eqn:lim2}
    \end{equation}
    Finally, $\mathbf{u}_i(t)\in\mathbb{R}^2$ implies that $E_{i,k}(t) \geq 0$ for $t\in(0, \infty)$ by \eqref{eqn:E}.
    %
    %
    %Due to their properties, we can easily find the limits of the energy function.
    %\begin{align}
    %    \lim_{t\to0^{+}}E_{i,k}(t)=\infty, \label{eqn:lim1}\\
    %    \lim_{t\to\infty}E_{i,k}(t)=\infty. \label{eqn:lim2}
        %\lim_{t\to\infty}\dot{E}_{i,k}(t)=\infty. \label{eqn:lim3}
    %\end{align}
    From \eqref{eqn:lim1}, if we select sufficiently small positive number $\varepsilon$, there exists $\gamma$ such that $E_{i,k}(\gamma) > E_{i,k}(\varepsilon)$, $\forall \gamma \in (0,\varepsilon)$. Likewise, from \eqref{eqn:lim2}, for sufficiently large number $\beta$, there exists $\delta$ such that $E_{i,k}(\beta) < E_{i,k}(\delta)$, $\forall \delta \in (\beta,\infty)$. This implies that the local minimum in $[\varepsilon,\beta]$ is the global minimum as well. According to the boundness theorem in calculus, a continuous function in the closed interval is bounded on that interval. That is, for the continuous function \eqref{eqn:Epol} in $[\varepsilon,\beta]$, there exist real number $\underline{m}$ and $\bar{m}$ such that:
    \begin{equation}
        \underline{m} < E_{i,k}(t) < \bar{m},~\forall t \in [\varepsilon,\beta],
    \end{equation}
    and the proof is complete.

    % These properties imply that there exists $\varepsilon$ such that $E_{i,k}(\gamma) > E_{i,k}(\varepsilon)$, $\forall \gamma \in (0,\varepsilon)$. Likewise, there exists $\alpha$ such that $E_{i,k}(\delta) > E_{i,k}(\alpha)$, $\forall \delta \in (\alpha,\infty)$. The energy function \eqref{eqn:Epol} is closed and bounded in $[\varepsilon,\alpha]$. Therefore, there always exists a local minimum in $[\varepsilon,\alpha]$, and also it is a global minimum.
\end{proof}

%Since we use the desired formation given by \eqref{eqn:goal}, Problem \ref{prb:energy} always have a global solution. This solution 
Proposition \ref{prb:energy} enables the agent to consider the energy-optimal arrival time during goal assignment.
%use of minimum energy in the goal assignment process.
In contrast, our previous work \cite{Beaver2019AGeneration,Beaver2020AnAgents} uses a fixed arrival time that is selected offline by a designer.% certain energy that varies according to the time parameter $T$ selected by a designer.

%%%%%%%%%%%%%%%%%%%%%%%%%%%%%%%%%%%%%%%%%%%%%%%%%%%%%%%%%%%%%%%%%%%%%%%%%%%%%%%%%%%%%%%%
% \todoNote{For the indicator function that does the banning, can we just say it's a function $p_i(j) \to \{0, 1\}$ bijective map from $\mathcal{A} \to \mathbb{N}$? That would save a lot of space and should be sufficient.}
%%%%%%%%%%%%%%%%%%%%%%%%%%%%%%%%%%%%%%%%%%%%%%%%%%%%%%%%%%%%%%%%%%%%%%%%%%%%%%%%%%%%%%%%

After the energy minimization is complete, each agent assigns itself and its neighbors to unique goals. %finds the best goal-agent pairs within its neighborhood and assigns itself to the unique goal.
This is achieved using an assignment matrix $\mathbf{A}_i(t)$ of size $|\mathcal{N}_i(t)|\times M$, which we define next.

\begin{definition} \label{def:assignmatrix}
    The \textit{assignment matrix} $\mathbf{A}_i(t)$ for each agent $i\in\mathcal{A}$ maps all agents $j\in\mathcal{N}_i(t)$ to a unique goal index $g\in\mathcal{F}$. The elements of $\mathbf{A}_i(t)$ are binary valued, and each agent is assigned to exactly one goal.
\end{definition}

We determine the assignment matrix by solving a decentralized optimization problem, which we present later in this section.
Next, we define the prescribed goal to show how the agent uses the assignment matrix.

\begin{definition} \label{def:prescribedgoal}
    For agent $i\in\mathcal{A}$, the \textit{prescribed goal} is
    \begin{equation}
        \mathbf{p}_i^a(t) \in \big\{ \mathbf{p}_k^*\in\mathcal{G}~|~a_{ik}=1,a_{ik}\in\mathbf{A}_i(t), k\in\mathcal{F} \big\} .
    \end{equation}
\end{definition}

Since the prescribed goal is determined using only local information, it is possible that two agents with different neighborhoods will prescribe themselves the same goal.
To solve this problem, each agent must know which agent it is competing with and which one has priority for the goal. This motivates our definitions of competing agents and the priority indicator function. 

\begin{definition} \label{def:competingagents}
    The set of \textit{competing agents} for agent $i\in\mathcal{A}$ is given by
    \begin{equation}
        \mathcal{C}_i(t) = \left\{ j\in\mathcal{N}_i(t) ~\large|~\mathbf{p}_j^a(t) = \mathbf{p}_i^a(t) , ~i\neq j \right\}.
    \end{equation}
\end{definition}

The information about competing agent is updated whenever a new agent enters the neighborhood of agent $i$.
If there is at least one competing agent, that is $|\mathcal{C}_i(t)| \geq 1$, then all agents $j\in\mathcal{C}_i(t)$ must compare their priority indicator function, which we define next.% the order of action according to the indicator function, which is defined next.

\begin{definition} \label{def:indicatorfunc}
    For each agent $i\in\mathcal{A}$, we define the \textit{priority indicator function} $\mathbb{I}_i : \mathcal{A}\setminus\{i\}\to\{0,1\}$. % that maps $\mathcal{A}\to\mathbb{N}$.
    We say that that agent $i\in\mathcal{A}$ has priority over agent $j\in\mathcal{A}\setminus\{i\}$, if and only if $\mathbb{I}_i(j)=1$.
    Additionally, $\mathbb{I}_i(j)=1$ if and only if $\mathbb{I}_j(i)=0$.
\end{definition}

%The indicator function is instantaneously and noiselessly measured and communicated by each agent under Assumption \ref{asm:asm1}.
The functional form of the priority indicator function is determined offline by a designer and is the same for all agents. By Assumption \ref{asm:asm1} the information required to evaluate priority is instantaneously and noiselessly measured and communicated between agents.
%In this paper, we give the priority with respect to the size of neighborhood and the required energy to reach the goal. 
Following this policy, the agent with no priority is permanently banned from its prescribed goal.
%\todoNote{This definition is a little tricky. G doesn't appear on the RHS of the set, and it should probably be a union over all time from ti0 to t since it has a "memory" - but then the definition of pia=pij gets a little funky.}
%\begin{definition} \label{def:bannedgoal}
%    For agent $i\in\mathcal{A}$, we define the set of \textit{banned goals} as
%    \begin{equation}
%        \mathcal{B}_i(t) = \left\{ g\in\mathcal{F} ~\large|~ \mathbf{p}_i^a(t) = \mathbf{p}_j^a(t), \mathbb{I}_i(j)=0, \forall j\in\mathcal{N}_i(t) \right\} .
%    \end{equation}
%\end{definition}
\begin{definition} \label{def:bannedgoal}
We denote the set of \textit{banned goals} for agent $i\in\mathcal{A}$ as
\begin{equation}
    \mathcal{B}_i(t) \subset \mathcal{F}.
\end{equation}
Elements are never removed from $\mathcal{B}_i(t)$, and a goal $g\in\mathcal{F}$ is added to $\mathcal{B}_i(t)$, if $\mathbf{p}_i^a(t) = \mathbf{p}_g^*(t)\in\mathcal{G}$ and $\mathbb{I}_i(j) = 0$ for any $j\in\mathcal{C}_i(t) \setminus \{ i \}$.
\end{definition}

Agent $i\in\mathcal{A}$ assigns itself a prescribed goal by 
%This banning process causes agent $i\in\mathcal{A}$ to find a new prescribed goal by
solving the following optimization problem, where we include the banned goals as constraints.

\begin{problem}[Goal Assignment] \label{prb:assignment}
	Each agent $i\in\mathcal{A}$ selects its prescribed goal (Definition \ref{def:prescribedgoal}) by solving the following binary program:
	\begin{align}
	\underset{a_{jk}\in\mathbf{A}_i}{\text{min}} \Bigg\{
	\sum_{j\in\mathcal{N}_i(t)} \sum_{k\in\mathcal{F}} a_{jk} E_{j,k}^*\Bigg\}
	\end{align}
	subject to:
	\begin{align}
			 \sum_{k\in\mathcal{F}}~ a_{jk} &= 1, \quad j\in\mathcal{N}_i(t), \label{eqn:p11} \\
			 \sum_{j\in\mathcal{N}_i(t)} a_{jk} &\leq 1, \quad k\in\mathcal{F},\label{eqn:p12}\\
			 a_{jk} &= 0, \quad \forall~j\in\mathcal{N}_i(t),~ k\in\mathcal{B}_j(t),\label{eqn:p13} \\
			 a_{jk} &\in \{0, 1\} \notag.
	\end{align}
	%This process is repeated by each agent, $i\in\mathcal{A}$, until \eqref{eq:conflictCondition} is satisfied for all $j\in\mathcal{N}_i(t)$.
\end{problem}

Next, we present Algorithm \ref{alg:algorithm}, which describes our event-driven protocol for assigning agents to goals using the competing agent set, priority indicator function, and banned goal set. % to determine the optimal assignment via Problem \ref{prb:assignment}. 
%\resizebox{0.8\textwidth}{!}{
\hspace*{-2cm}
\begin{algorithm} [ht]
    Solve Problem \ref{prb:assignment}\;
    Determine prescribed goal\;
    Generate optimal trajectory to assigned goal\;
    \If{$|\mathcal{C}_i(t)| \geq 1$}
    {
    Compare $\mathbb{I}_i(j)$ for all $j\in\mathcal{C}_i(t)$\;
        \If{any $\mathbb{I}_i(j) = 0$}
        {Add current goal to $\mathcal{B}_i(t)$\;
        Solve Problem \ref{prb:assignment}\;
        Determine prescribed goal\;
        Generate optimal trajectory to assigned goal\; }
    }
    \caption{Event-driven algorithm to determine the prescribed goal for each agent $i\in\mathcal{A}$.} \label{alg:algorithm}
\end{algorithm}
%}

\begin{proposition}[Solution Existence] \label{prp:solutionExistance}
    A solution to Problem \ref{prb:assignment} always exists.% for agent $i\in\mathcal{A}$ if all banned goals $g\in\mathcal{B}_i(t)$ are always assigned to another agent $j\in\mathcal{A}\setminus\{i\}$.
%
%
 %   A solution to Problem \ref{prb:assignment} always exists only if any agent $i\in\mathcal{A}$ bans a specific goal that has been assigned to any other agent $j\in\mathcal{A}$.
	%If an agent bans a goal only when there exists any other agent assigned to the goal, then there always exists a solution to Problem \ref{prb:assignment}.
\end{proposition}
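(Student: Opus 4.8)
The plan is to split the claim into two parts: first reduce ``a solution exists'' to ``a feasible assignment exists,'' and then prove feasibility by a matching argument. Every variable $a_{jk}$ in Problem~\ref{prb:assignment} is binary and there are only $|\mathcal{N}_i(t)|\times M$ of them, so the feasible set is a finite subset of $\{0,1\}^{|\mathcal{N}_i(t)|\times M}$. On this set the objective $\sum_{j}\sum_{k}a_{jk}E_{j,k}^{*}$ takes finitely many real values, each finite because every $E_{j,k}^{*}$ is finite and nonnegative by Proposition~\ref{prp:minE}. Hence, as soon as the feasible set is nonempty, the minimum is attained and an optimal assignment matrix exists. This reduces the proposition to showing that constraints \eqref{eqn:p11}--\eqref{eqn:p13} admit at least one binary matrix.

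Next I would recast feasibility as bipartite matching. Form a bipartite graph with a vertex for each agent $j\in\mathcal{N}_i(t)$ and each goal $k\in\mathcal{F}$, joining $j$ and $k$ by an edge exactly when $k\notin\mathcal{B}_j(t)$. Then \eqref{eqn:p11} and \eqref{eqn:p12} ask for a matching that saturates every agent vertex, while \eqref{eqn:p13} encodes that matched edges avoid banned goals. By Hall's marriage theorem, such a matching exists if and only if, for every subset $S\subseteq\mathcal{N}_i(t)$, the reachable goal set $\bigcup_{j\in S}\big(\mathcal{F}\setminus\mathcal{B}_j(t)\big)$ satisfies $|\bigcup_{j\in S}(\mathcal{F}\setminus\mathcal{B}_j(t))|\geq|S|$.

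The crux, and the step I expect to be the main obstacle, is verifying Hall's condition despite the banned sets, whose elements are never removed and therefore accumulate. The leverage is $M\geq N\geq|\mathcal{N}_i(t)|$ combined with the priority structure of Definitions~\ref{def:competingagents}--\ref{def:bannedgoal}: a goal enters $\mathcal{B}_j(t)$ only after agent $j$ cedes it to a strictly higher-priority competitor, and by Definition~\ref{def:indicatorfunc} priority is antisymmetric on each pair, inducing a strict order on the agents. I would use this to charge each banned goal of a subset $S$ to a distinct higher-priority agent, so that the goals removed from the reach of $S$ are controlled and the surplus $M-N\geq0$ absorbs the remainder, yielding Hall's inequality. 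An equivalent and possibly cleaner route is to build a feasible assignment directly: process the agents in decreasing priority order and greedily hand each its lowest-cost available goal; the only failure mode is an agent whose $M$ goals are all banned or taken, whose exclusion is the same counting estimate. Either way the technical effort concentrates on bounding the \emph{joint} coverage of banned goals over arbitrary $S$, since a per-agent bound on $|\mathcal{B}_j(t)|$ alone does not imply Hall's condition.
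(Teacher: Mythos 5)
Your proposal is correct in substance but takes a heavier route than the paper. The paper's proof never invokes Hall's theorem: it observes that if $n_b(t)=|\bigcup_{i\in\mathcal{A}}\mathcal{B}_i(t)|$ goals are banned by anyone, then (by the banning mechanism of Definition \ref{def:bannedgoal} and Algorithm \ref{alg:algorithm}) each of those $n_b(t)$ goals is currently held by a distinct winner agent who has \emph{not} banned it, so at most $N-n_b(t)$ agents remain to be placed into $M-n_b(t)$ goals that nobody is banned from, and $M\geq N$ closes the count in one line. Your Hall's-condition formulation is a more rigorous packaging of the same idea, and the step you flag as the main obstacle does go through with exactly the paper's charging fact: the goals unreachable by a subset $S$ are $\bigcap_{j\in S}\mathcal{B}_j(t)$, each such goal has a distinct winner in $\mathcal{A}\setminus S$ (the winner never bans its own goal), hence $|S|+\bigl|\bigcap_{j\in S}\mathcal{B}_j(t)\bigr|\leq N\leq M$, which is Hall's inequality. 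What your route buys is robustness -- it would still work if unbanned goals were not universally available, i.e., if the sets $\mathcal{F}\setminus\mathcal{B}_j(t)$ had arbitrary structure -- and your explicit remark that a finite nonempty feasible set guarantees the minimum is attained is a step the paper leaves implicit. What the paper's route buys is brevity: because every goal outside $\mathcal{B}(t)$ is available to every agent, the per-subset verification collapses to a single aggregate count, so the marriage theorem is unnecessary machinery here. The one thing to be careful about in either version is the structural claim itself (each banned goal retains a distinct currently-assigned winner); the paper asserts it from Algorithm \ref{alg:algorithm} without proof, and your proof would need to justify it just the same.
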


\begin{proof}

    Let $\mathcal{B}(t) = \bigcup_{i\in\mathcal{A}} \mathcal{B}_i(t)$ be the set of all goals which any agent is banned from.
    Let $n_b(t) = |\mathcal{B}(t)|$, then based on Algorithm \ref{alg:algorithm}, there must be exactly $n_b(t)$ agents assigned to the $n_b(t)$ banned goals.
    Thus, any agent $i\in\mathcal{A}$ must assign at most $N - n_b(t)$ agents to $M - n_b(t)$ goals when solving Problem \ref{prb:assignment}.
    As $M \geq N$, $M - n_b(t) \geq N - n_b(t)$, and the feasible space of Problem \ref{prb:assignment} is always non-empty.

    \end{proof}

%\begin{proof}
%	Define the set of agents assigned to the banned goals as 
%    \begin{equation}\mathcal{D}_i(t)=\left\{j\in\mathcal{A} ~\Big|~\mathbf{p}_j^a(t)=\mathbf{p}_g^*(t), ~g\in \bigcup_{l\in\mathcal{N}_i(t)} \mathcal{B}_l(t)\right\}.
%    \end{equation}
%    Let $\left|\mathcal{N}_i(t)\right|=n$, $\left|\mathcal{A}-\mathcal{N}_i(t)\right|=\bar{n}$, and $\left|\mathcal{D}_i(t)\right|=b$. Then, we can write $\left|\mathcal{N}_i(t)\cap\mathcal{D}_i(t)\right|=n_a \leq n$ ,$\left|\mathcal{D}_i(t)-\mathcal{N}_i(t)\right|=\bar{n}_a \leq \bar{n}$, $b=n_a+\bar{n}_a$, and $N=n+\bar{n}$. Thus, it follows that $M-b\geq N-b \geq n-n_a$.
%    
%    Now, let the injective function $w : \mathcal{A} \to \mathcal{F}$ maps each agent to a goal. Then, we know that $n_a$ agents can be assigned to the goals $g\in \bigcup_{l\in\mathcal{N}_i(t)} \mathcal{B}_l(t)$. Since the remaining $M-b$ goals outnumber the other $n-n_a$ agents, there exists a function $w$. As $w$ is injective, the imposed mapping must satisfy (\ref{eqn:p11}) and (\ref{eqn:p12}). Therefore, the mapping imposed by the function $w$ is a feasible solution to Problem \ref{prb:assignment}.
%    
%\end{proof}

Each agent $i\in\mathcal{A}$ initially solves Problem \ref{prb:assignment} to assign itself to a goal, and re-solves Problem \ref{prb:assignment} whenever its neighborhood $\mathcal{N}_i(t)$ switches and the set of competing agents becomes non-empty.
It is possible that several agents may assign themselves to the same goal.
If it is the case, all conflicting agents repeat the banning and assignment process until all agents are assigned to a unique goal.
Next, using Proposition \ref{prp:minE} and Proposition \ref{prp:solutionExistance}, we propose Theorem \ref{thm:theorem} which guarantees convergence of all agents to a unique goal in a finite time.%we can guarantee that the assignment algorithm matches all the agents to the different goals in a finite energy-optimal arrival time.

\begin{theorem} \label{thm:theorem}
Let any agent $i\in\mathcal{A}$ be assigned to a goal $k\in\mathcal{F}$ under our proposed banning and reassignment approach (Definitions \ref{def:competingagents} - \ref{def:bannedgoal}) and polynomial goal trajectories \eqref{eqn:goal}.
If the solution to Problem \ref{prb:energy} is never increasing, i.e., $ E_{i,k}^*(t_1) \geq E_{i,k}^*(t_2)$ for sequential assignments of agent $i$ to goal $k$ at times $t_1, t_2 \in \mathbb{R}_{\geq0},$ where $t_2 > t_1$, then all agents arrive at their unique assigned goal in finite time.
\end{theorem}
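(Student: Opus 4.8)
The plan is to decompose the claim into two essentially independent facts and then combine them: first, that the banning-and-reassignment procedure of Algorithm \ref{alg:algorithm} terminates after finitely many reassignments with every agent holding a unique goal; and second, that once an agent's assignment is fixed, its optimal arrival time is finite. A finite number of reassignments, each carrying a finite and non-growing arrival time, will then yield finite total arrival time for every agent.

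For the first fact I would exploit the monotonicity of the banned-goal sets. By Definition \ref{def:bannedgoal}, elements are never removed from $\mathcal{B}_i(t)$, and each banning event adds at least one goal to some agent's set. Since $\mathcal{B}_i(t)\subset\mathcal{F}$ with $|\mathcal{F}|=M$, the total number of (agent, banned-goal) pairs is bounded by $NM$, so the procedure can trigger at most $NM$ banning events. The antisymmetry of the priority indicator (Definition \ref{def:indicatorfunc}) guarantees that whenever two agents compete (Definition \ref{def:competingagents}), exactly one loses and bans its goal, so every conflict strictly enlarges the global banned set $\mathcal{B}(t)=\bigcup_{i\in\mathcal{A}}\mathcal{B}_i(t)$ and cannot recur. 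Proposition \ref{prp:solutionExistance} keeps Problem \ref{prb:assignment} feasible after each banning, so the cascade of reassignments is well-defined and, being bounded, terminates in a conflict-free assignment in which each agent holds a unique goal.

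For the second fact I would invoke Proposition \ref{prp:minE}: for polynomial goals \eqref{eqn:goal} the minimum-energy arrival time is attained at a finite point in some interval $[\varepsilon,\beta]$, precisely because $E_{i,k}(t)\to\infty$ as $t\to0^+$ and as $t\to\infty$. The hypothesis that $E_{i,k}^*$ is non-increasing across sequential assignments then bounds every subsequent optimal energy by its initial value; since $E_{i,k}(t)$ grows without bound for large $t$, this uniform energy ceiling confines the optimal arrival time to a fixed compact interval, so it remains finite across all reassignments. After the final reassignment the agent follows a fixed optimal trajectory and reaches its unique goal within this bounded time.

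The main obstacle I anticipate is the quantitative link in the second fact, namely converting the energy bound into a bound on arrival time, since energy and arrival time are not monotonically related in general; I would therefore argue via the coercivity of $E_{i,k}(\cdot)$ established in Proposition \ref{prp:minE} rather than through any direct inequality. A secondary subtlety is ruling out an accumulation of event times: reassignments are triggered not only by banning but also by neighborhood switches, so I would need to note that after the last banning event the assignments, and hence the optimal trajectories, are fixed, the agents converge along smooth polynomial paths, and the neighborhoods stabilize, which prevents infinitely many reassignments within a finite horizon.
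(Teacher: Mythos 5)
Your proposal reaches the right conclusion, but it takes a genuinely different and far more self-contained route than the paper. The paper's proof consists of exactly two observations --- Proposition \ref{prp:minE} gives a finite optimal arrival time $t_{i,k}$ for every assignment, and Proposition \ref{prp:solutionExistance} gives feasibility of Problem \ref{prb:assignment} after every banning event --- and then declares these sufficient to satisfy the premise of the \emph{Assignment Convergence Theorem} of \cite{Beaver2020AnAgents}, which is invoked as a black box to conclude finite-time arrival; the non-increasing condition on $E_{i,k}^*$ is carried along precisely because it is part of that cited theorem's premise. You instead reconstruct the convergence argument explicitly: the monotone growth of the banned sets $\mathcal{B}_i(t)\subset\mathcal{F}$ bounds the number of banning events by $NM$ (your pair-counting bound is the right one --- note that a single conflict need not enlarge the global union $\mathcal{B}(t)$ if another agent has already banned that goal, so the per-agent count is what actually terminates the cascade), feasibility is preserved by Proposition \ref{prp:solutionExistance}, and each of the finitely many trajectory segments has finite duration by Proposition \ref{prp:minE}. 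Your version buys transparency and independence from the external reference, and your closing remarks correctly flag the two points the paper leaves implicit inside the citation: that every reassignment triggered by a nonempty competing set produces a new ban (so event times cannot accumulate), and that coercivity of $E_{i,k}(\cdot)$, not any monotone energy--time relation, is what keeps arrival times finite. The one place you are slightly weaker than you need to be is the claim that the energy ceiling confines arrival times to a ``fixed'' compact interval --- the coefficients $\alpha_l$ in \eqref{eqn:Epol} change with the agent's state at each reassignment, so the interval $[\varepsilon,\beta]$ is not uniform across reassignments --- but since you only need finiteness of each of finitely many segments, this does not break the argument.
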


\begin{proof}
First, for each agent $i\in\mathcal{A}$ assigned to a goal $k\in\mathcal{F}$, Proposition \ref{prp:minE} implies that a finite arrival time, $t_{i,k}$ always exists.
Second, Propsition \ref{prp:solutionExistance} implies that a solution to the assignment problem (Problem \ref{prb:assignment}) always exists. 
This is sufficient to satisfy the premise of the Assignment Convergence Theorem presented in \cite{Beaver2020AnAgents}, which guarantees all agents arrive at a unique goal in finite time.
\end{proof}

\section{Optimal Path Planning} \label{sec:PathPlanning}
After being assigned to a goal with the optimal arrival time, each agent must find the energy-optimal trajectory to reach their assigned goal.
For trajectory generation, each agent plans over the horizon $[0, t_{i,k}]\subset\mathbb{R}_{\geq0}$, where $t=0$ is the current time and $t=t_{i,k}$ is the optimal arrival time.
The initial and final states of each agent $i\in\mathcal{A}$ is
\begin{align}
    \mathbf{p}_i(0) &= \mathbf{p}_i^0, \quad \quad &\mathbf{v}_i(0) &= \mathbf{v}_i^0, \label{eqn:init}\\
    \mathbf{p}_i(t_{i,k}) &= \mathbf{p}_i^a(t_{i,k}),  &\mathbf{v}_i(t_{i,k})&=\dot{\mathbf{p}}_i^a(t_{i,k}), \label{eqn:fin}
\end{align}
where $t_{i,k}$ is the argument that minimizes Problem \ref{prb:energy}.
To avoid collisions we impose a safety constraint to all agents with lower priority,
\begin{align}
    d_{ij}(t) \geq 2R,~~ &\forall j\in \{ \xi\in \mathcal{A} ~ | ~ \mathbb{I}_i (\xi) = 0\}, \label{eqn:safety} \\
    &\forall t \in [t_i^0, t_{i,k}]. \nonumber
\end{align}

Next, we formulate the decentralized optimal path planning problem.
\begin{problem}[Path Planning]  \label{prb:path}
    For each agent $i\in\mathcal{A}$ assigned to goal $k\in\mathcal{F}$, the optimal path can be found by solving the following optimal control problem,
    \begin{align}
         &\min_{\mathbf{u}_i(t)} \frac{1}{2} \int_{0}^{t_{i,k}} || \mathbf{u}_i(\tau)||^2 d\tau \\
         &\text{subject to: } \eqref{eqn:pDynamics}, \eqref{eqn:vDynamics}, \eqref{eqn:vBounds}, \eqref{eqn:uBounds}, \nonumber\\
         &\text{given: } \eqref{eqn:init}, \eqref{eqn:fin}. \nonumber
    \end{align}
\end{problem}

We derive the analytical solution to this problem by following the standard methodology used in optimal control problems with state and control constraints \cite{Malikopoulos2018a,Bryson1975AppliedControl,Malikopoulos2020,Ross2015}. First, we consider the unconstrained solution, given by \eqref{eqn:uUnc} - \eqref{eqn:pUnc}.
If the solution violates any of the constraints, then it is connected with the new arc corresponding to the violated constraint. This yields a set of the algebraic equation that are solved simultaneously using the boundary conditions of Problem \ref{prb:path} and interior conditions between the arcs. This process is repeated until no constraints are violated, which yields the feasible solution for Problem \ref{prb:path}.

The solution is a piecewise-continuous state trajectory composed of the following optimal motion primitives \cite{Beaver2020AnAgents}:
\begin{enumerate}
    \item no constraints are active,
    \item one safety constraint is active,
    \item multiple safety constraints are active,
    \item one state/control constraint is active, and
    \item multiple state/control constraint are active.
\end{enumerate}
%The optimal path for agent $i\in\mathcal{A}$ must be consist of these possible cases.
For the full derivation of the solution for each case, see \cite{Beaver2020AnAgents}.

\section{Simulation Results} \label{sec:Simulation}
% In this section, we present the simulation results to evaluate the effectiveness of the proposed method. First, we conduct the simulation using the proposed method and using that of the previous works. Then, we verify the performance of this method by comparing the energy consumption. These simulations were performed with 10 agents and goals. The simulation were conducted until one of them finish generating the desired formation.

In this section, we present a series of simulation results to evaluate the effectiveness of the proposed method. All the simulations were conducted with $N=M=10$ agents and goals. The velocity of all the goals are given by the polynomials
\begin{equation}
    \mathbf{v}^*(t) = \left[\begin{array}{c}
         v_x^*(t)\\v_y^*(t)
    \end{array} \right] = \left[\begin{array}{cc}
         0.05t^3-0.3t^2+0.45t  \\
         0.02t+0.05
    \end{array}
    \right] .
\end{equation}
We randomly selected the initial positions of the agents in $\mathbb{R}^2$, which we then fixed for each simulation.

To demonstrate the effect of the energy-optimal arrival time (Problem \ref{prb:energy}), we compared the simulation results of the proposed method with that of the previous method \cite{Beaver2020AnAgents}, as shown in Fig.~\ref{fig:inf_proposed} and Fig.~\ref{fig:inf_previous}. %Fig.~\ref{fig:inf_simulation}.
We selected $T=5$ for the time parameter of the previous method. To remove the effect of decentralization on the performance, we set the sensing distance $h=\infty$ for both cases.

\begin{figure}[t!]
    \centering
    \includegraphics[width=0.8\linewidth]{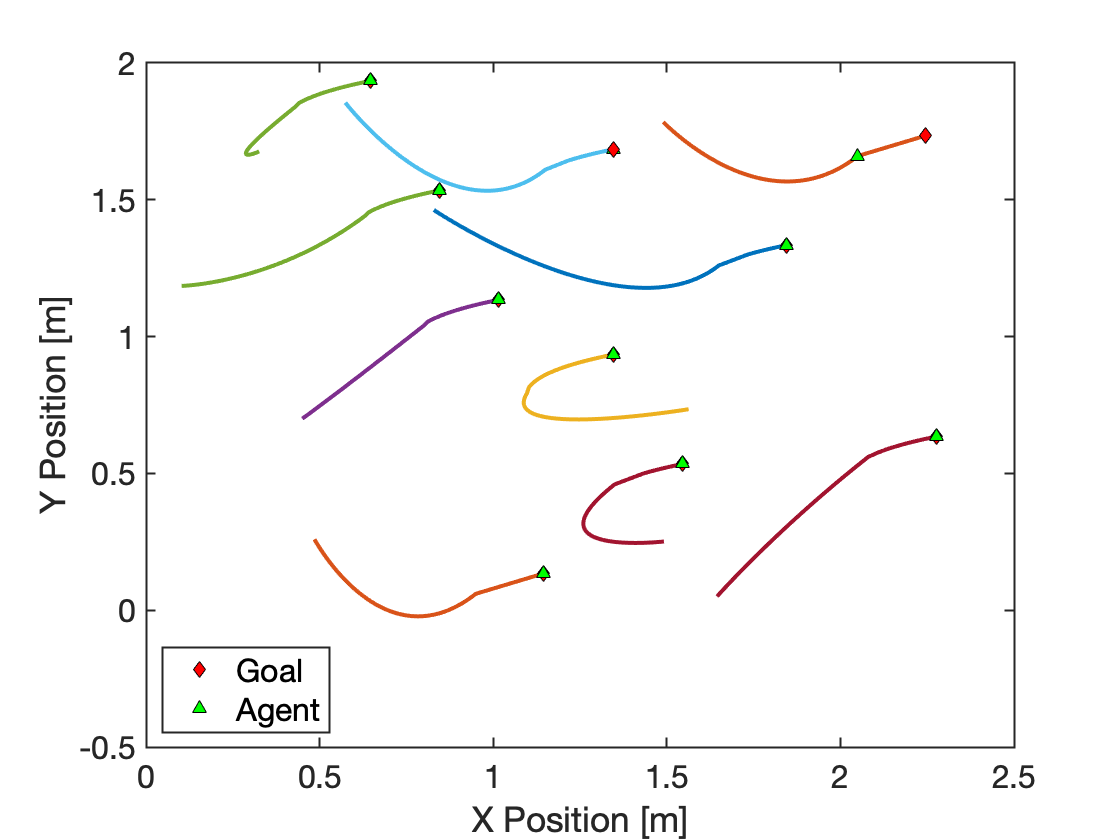}
    \caption{Simulation result for the proposed method with $h=\infty$}
    \label{fig:inf_proposed}
\end{figure}
\begin{figure}[t!]
    \centering
    \includegraphics[width=0.8\linewidth]{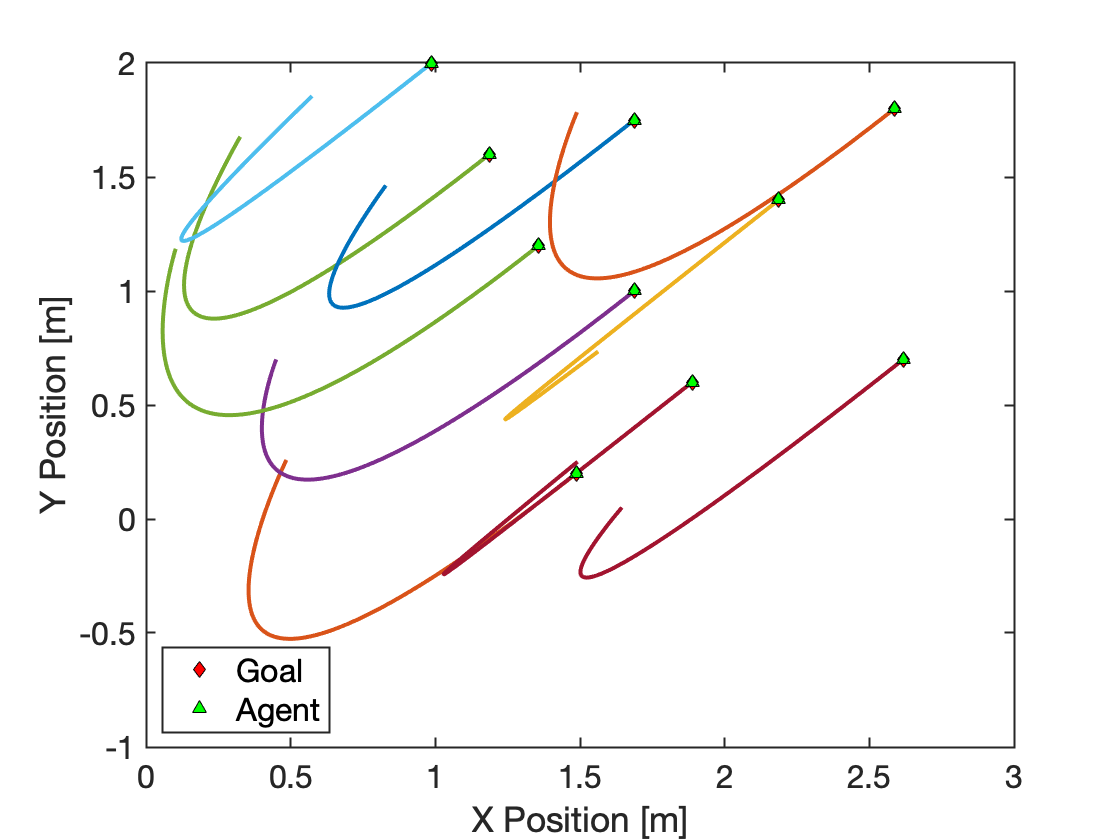}
    \caption{Simulation result for the previous method with $h=\infty$}
    \label{fig:inf_previous}
\end{figure}

\begin{figure}[tbh!]
    \centering
    \includegraphics[width=0.8\linewidth]{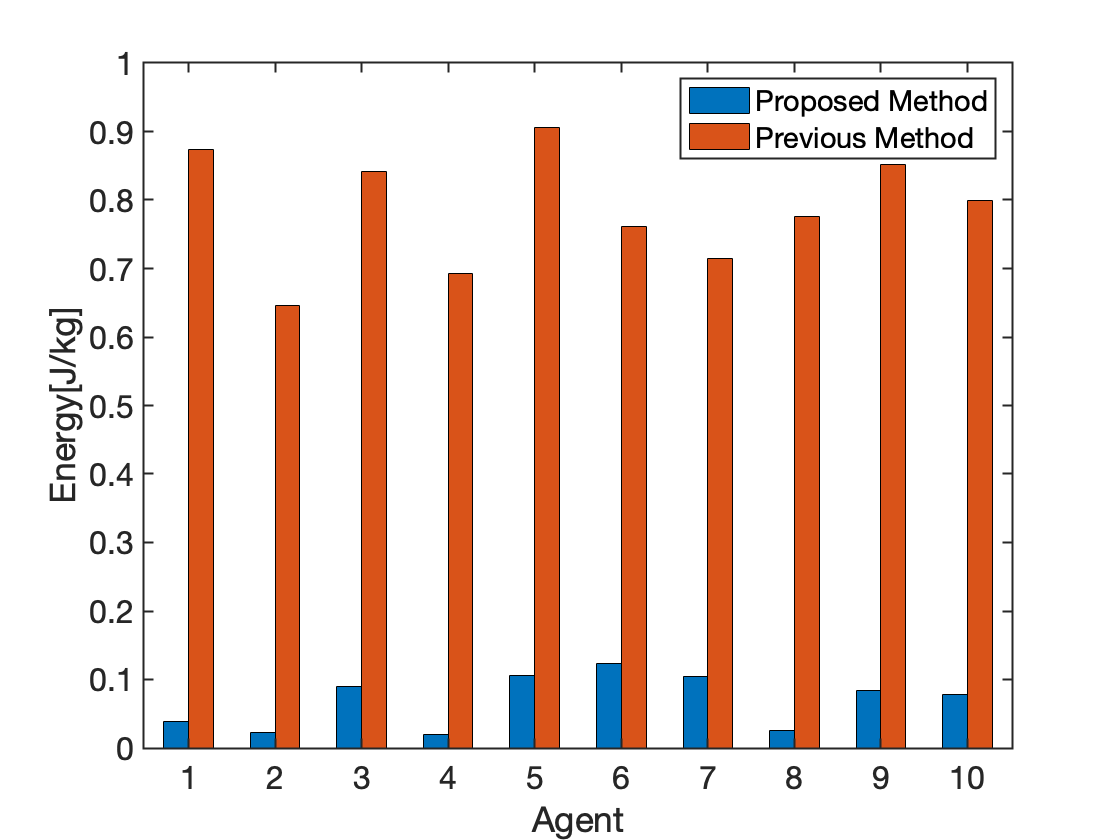}
    \caption{Total energy consumption of each agents}
    \label{fig:Etot}
\end{figure}

\begin{table}[tbh!]
    \centering
    \resizebox{0.95\textwidth}{!}{
    \begin{tabular}{c | c c}
         & Proposed method & Previous method  \\
         \hline
     Energy consumption &  0.69 kJ/kg & 7.86 kJ/kg\\
     Total arrival time & 4.57 s & 5 s
    \end{tabular}}
    \caption{Numerical result for comparison between the proposed method and the previous one.}
    \label{tab:comparison}
\end{table}

Numerical results are shown in Table \ref{tab:comparison}. The proposed method reduced the total energy consumption by 91.2\% compared to the previous method. 
This result shows that, in some cases, not only energy consumption but also the total time required to achieve the desired formation is improved.
We attribute this improvement to our algorithm selecting the optimal arrival time through Problem \ref{prb:energy}, rather than using a fixed arrival time.
The energy use of each agent for both cases are given in Fig. \ref{fig:Etot}, and all the agents consumed a minimum of 83.8\% to a maximum of 97.2\% less energy than the previous method.

\begin{table}[tb]
    \centering
    \begin{tabular}{c c c c c}
        \hline
        $h$ [m] & min. separation & $E$& $t_f$ & Total bans \\
        & [cm] & [kJ/kg] & [s] & \\
        \hline
        $\infty$ & 25.25 & 0.69 & 4.57  & 0\\
        1.25 & 16.84 & 454.3 & 4.57 & 6\\
        1.00 & 5.84 & 3.06 & 4.54 & 7\\
        0.75 & 10.62 & 15.48 & 4.54 & 5\\
        0.50 & 25.25 & 4.24 & 4.57 & 4\\
        \hline
    \end{tabular}
    \caption{Numerical results with different sensing distances.}
    \label{tab:data}
\end{table}

\begin{figure*}[th!]
    \centering
    \subfloat[$h=0.5$m]{\includegraphics[width=0.4\linewidth]{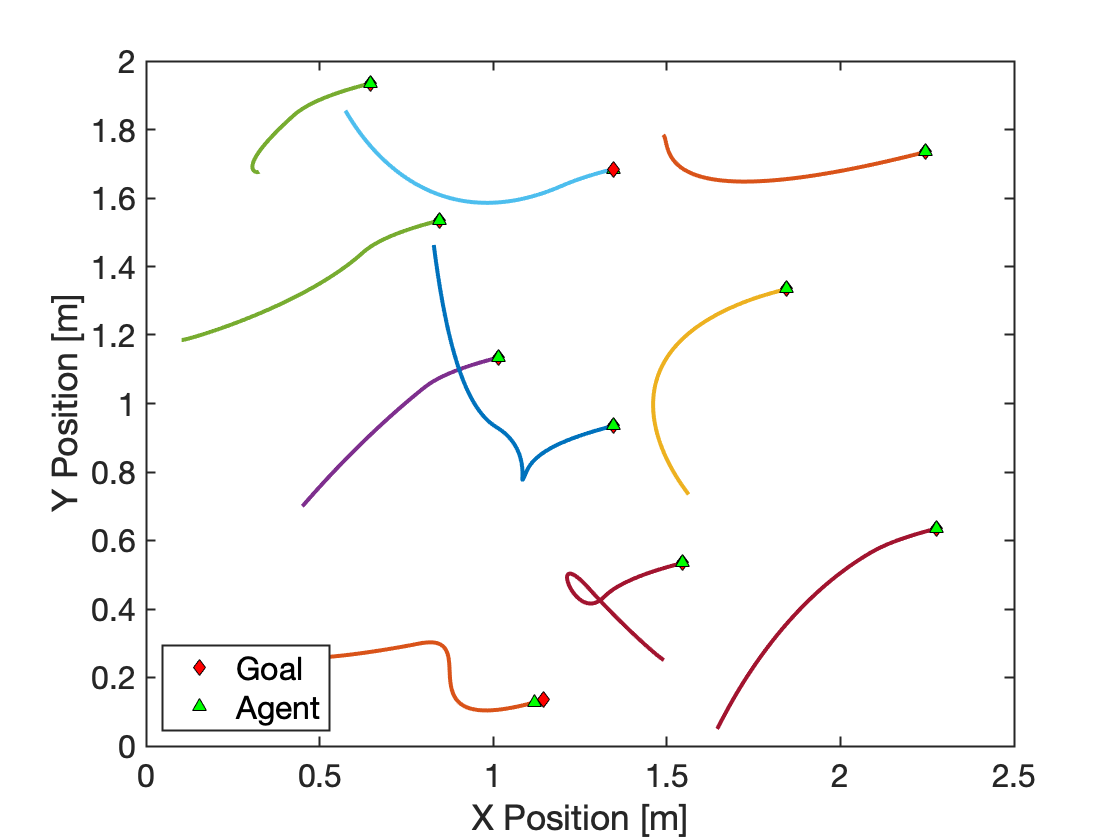}\label{a}}
    \subfloat[$h=0.75$m]{\includegraphics[width=0.4\linewidth]{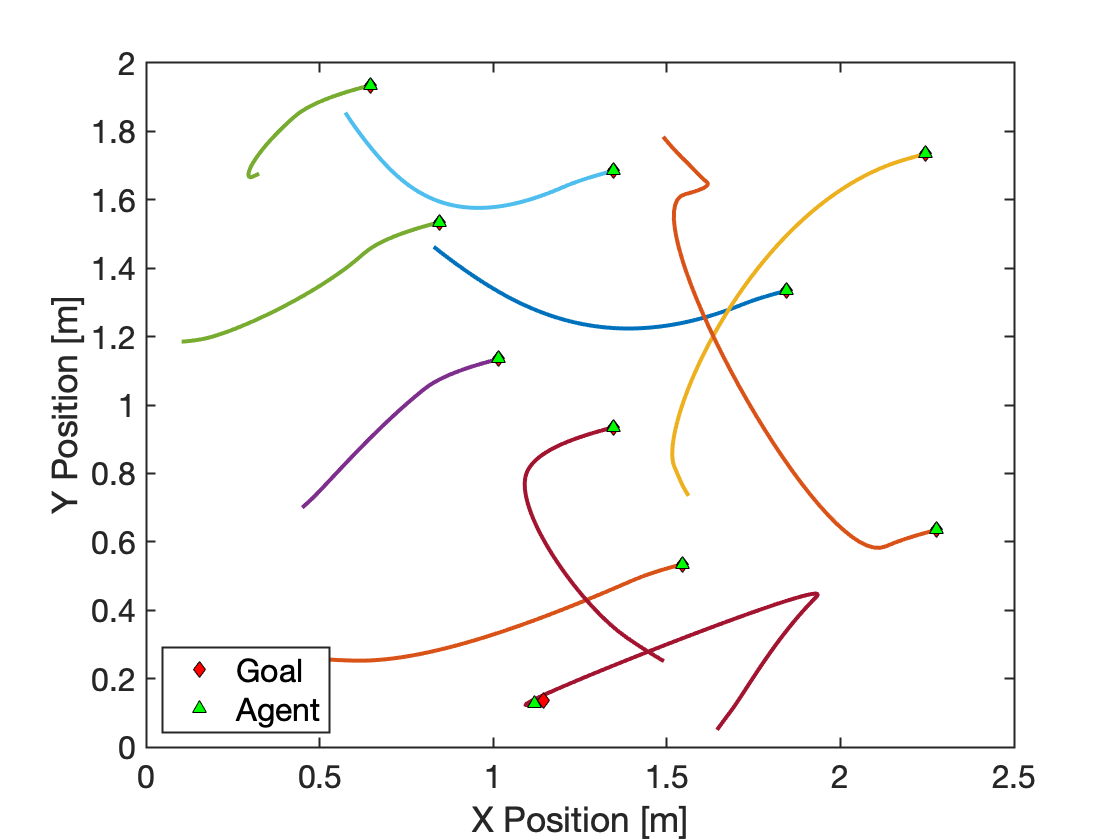}\label{b}}\\
    \subfloat[$h=1$m]{\includegraphics[width=0.4\linewidth]{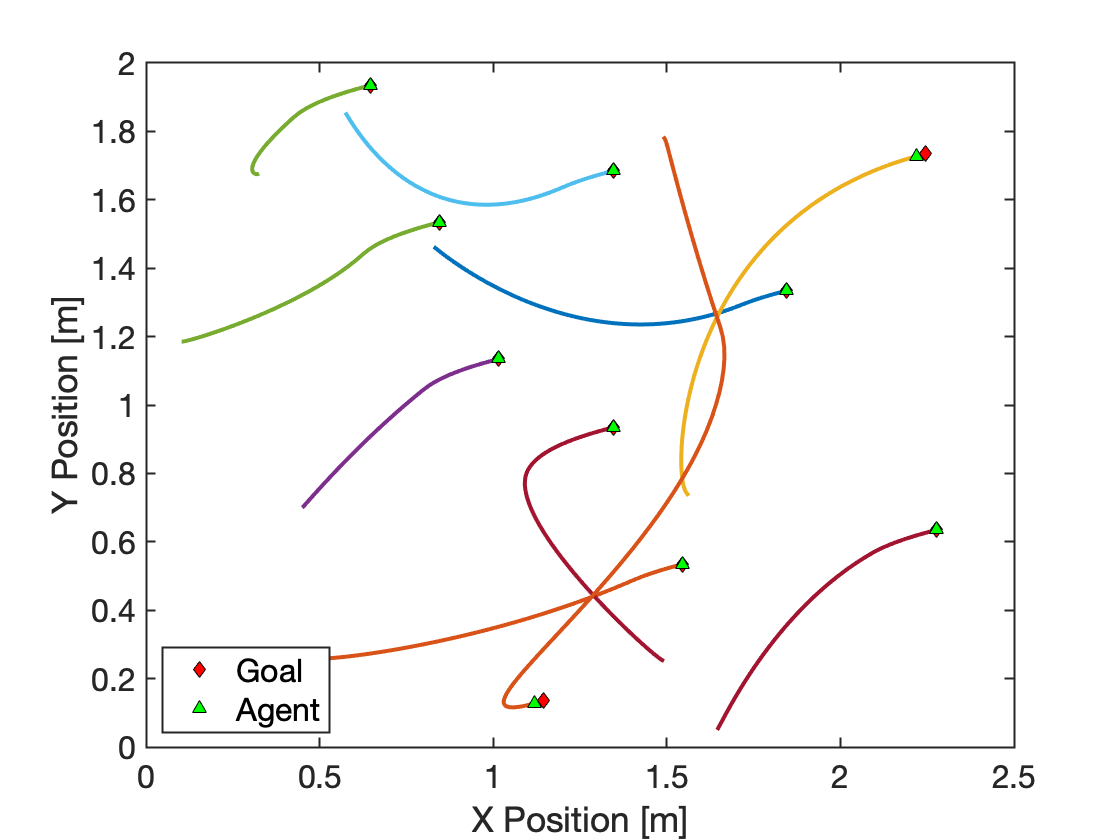}\label{c}}
    \subfloat[$h=1.25$m]{\includegraphics[width=0.4\linewidth]{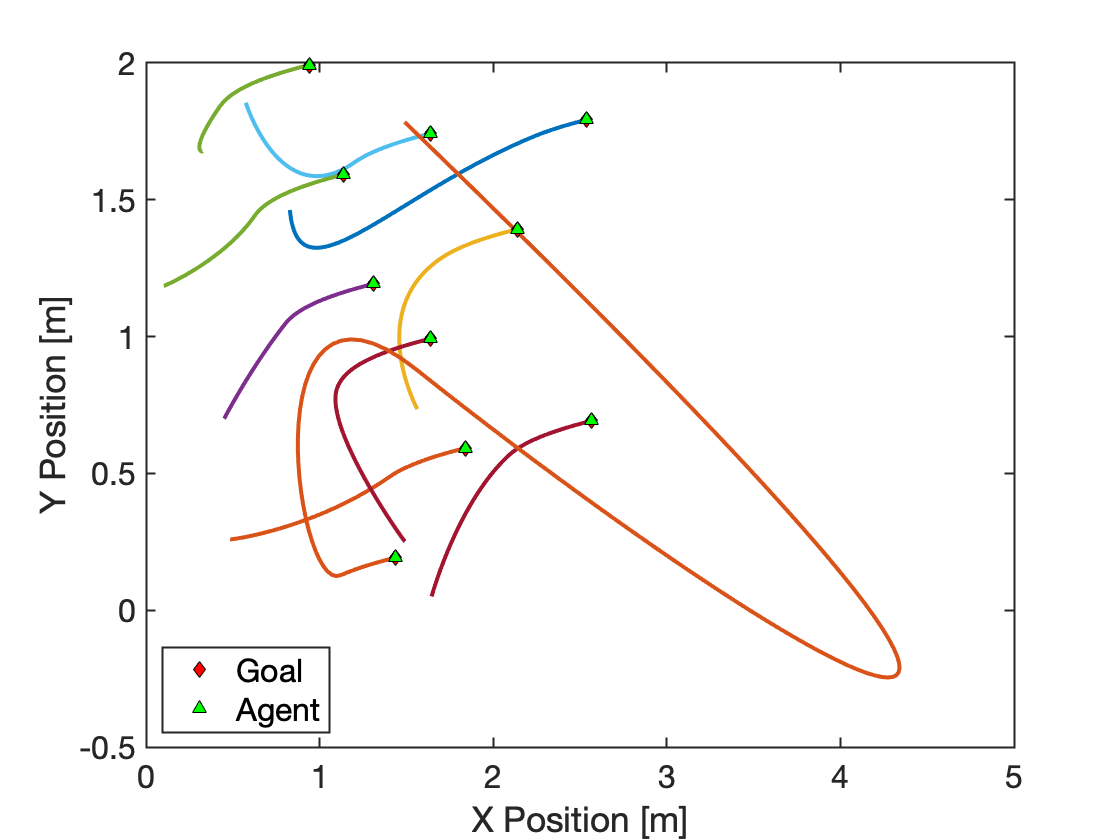}\label{d}}
    \caption{Trajectory of each agents with different sensing distances.}
    \label{fig:sensing}
\end{figure*}

Next, we simulated the agents with various sensing distances to understand its effect on performance.
We implemented a priority indicator function based on the neighborhood size, energy cost, and index of each agent as described in \cite{Beaver2020AnAgents}.
The results are shown in Table \ref{tab:data}, and Fig. \ref{fig:sensing} illustrates the trajectories generated by the agents %to reach the goals with different sensing distance
with various values of $h$.
As with our previous work, \cite{Beaver2019AGeneration}, the results in Table \ref{tab:data} show no correlation between the sensing distance and energy consumption.
With respect to the agents' initial position and the desired formation, some information forces the agent to select the goal that is further than the one the agent would choose without that information, resulting in extra energy consumption.
This process is shown in Fig. \ref{fig:sensing}.
Compared to (a), the trajectory of one agent (shown with the orange line) gets longer and longer in (b), (c), and (d). 
The agent with a longer sensing distance may select a better goal at the beginning due to its extra information about other agents.
However, as shown in Table \ref{tab:data}, this may increase the number of banned goals, resulting in a higher number of assignments and reducing performance.

\section{Conclusion} \label{sec:Conclusion}

In this paper, we proposed an extension of our previous work on energy-optimal goal assignment and trajectory generation. The goal assignment task was separated into two sub-problems that include (1) finding energy-optimal arrival time and (2) assigning each agent to a unique goal. With the goal dynamics in the form of polynomials, we proved that our proposed approach guarantees that all agents arrive at a unique goal in finite time.
We validated the effectiveness of our approach through simulation.
Compared to previous work, we have shown a significant reduction in energy consumption.

Future work should consider how the initial position of the agents and desired formation affects energy consumption.
Quantifying the relationship between sensing distance and performance is another interesting area of research, as well as adapting agent memory and other information structures to the problem.
Finally, using recent results constraint-driven optimal control \cite{Beaver2020Energy-OptimalConstraints} to generate agent trajectories in real time is another compelling research direction.

\balance

\bibliographystyle{IEEEtran}
\bibliography{Bang, IDS, LoganMendeley}

\end{document}